\newtheorem{proposition}{Proposition}
\newtheorem{definition}{Definition}
\newtheorem{theorem}{Theorem}
\newtheorem{remark}{Remark}
\newcommand{\R}{\mathbb{R}}
\newcommand{\Splus}{\mathbb{S}^n_{>0}}
\newcommand{\Sskew}{\mathbb{S}^n_{\mathrm{skew}}}
\def\BibTeX{{\rm B\kern-.05em{\sc i\kern-.025em b}\kern-.08em
    T\kern-.1667em\lower.7ex\hbox{E}\kern-.125emX}}
\begin{document}
\title{\LARGE \bf
Distributions and Direct Parametrization for Stable Stochastic State-Space Models\\

}
\author{Mohamad Al Ahdab$^{1,2}$, Zheng-Hua Tan$^{1,2}$, John Leth$^{1}$ 
    \thanks{This project is supported by the Pioneer Centre for Artificial Intelligence, Denmark.}
	\thanks{$^{1}$ Department of Electronic Systems, Aalborg University, Aalborg Øst 9220, Denmark.
		{\tt\small \{maah,zh,jjl\}@es.aau.dk}} \thanks{$^{2}$Pioneer Centre for Artificial Intelligence, Copenhagen 1350, Denmark.}
}
\maketitle
\thispagestyle{empty}
\begin{abstract}
We present a direct parametrization for continuous-time stochastic state-space models that ensures external stability via the stochastic bounded-real lemma. Our formulation facilitates the construction of probabilistic priors that enforce almost-sure stability which are suitable for sampling-based Bayesian inference methods. We validate our work with a simulation example and demonstrate its ability to yield stable predictions with uncertainty quantification.
\end{abstract}

\pagestyle{empty}
\section{Notation}
\label{sec:notation}
We let {$\mathbb{R}_{>0}=(0,\infty)$, $\mathbb{R}_{\geq0}=[0,\infty)$}, $\Splus$ denote the convex cone of positive definite matrices, and $\Sskew$ the vector space of $n\times n$ skew-symmetric matrices. 
Let \(\operatorname{vec}:\mathbb{R}^{m\times n}\to\mathbb{R}^{mn}\) be defined by mapping any \(X\in \mathbb{R}^{m\times n}\) to its column-stacked vector (with inverse \(\operatorname{vec}^{-1}\)). Similarly, define \(\operatorname{veck}:\Sskew \to\mathbb{R}^{\frac{n(n-1)}{2}}\) by mapping any skew-symmetric \(S\) to the vector of its below-diagonal entries (with inverse \(\operatorname{veck}^{-1}\)). For a complex vector \(v\in\mathbb{C}^n\), we let \(v^\dagger\) denotes its Hermitian (conjugate) transpose. We let \(\Gamma\) denote the Gamma function, which generalizes the factorial to complex numbers (with \(\Gamma(n)=(n-1)!\) for positive integers \(n\)). Finally, we let \(\mathrm{diag}_{m\times n}(a_1,\dots,a_{\mathrm{min}\{m,n\}})\) denotes the \(m\times n\) rectangular diagonal matrix with \(a_i\) in the \((i,i)\)th entry for \(1\leq i\leq\min\{m,n\}\) and zeros elsewhere. 

\section{Introduction}
State-space models (SSMs) provide a framework for representing dynamical systems and have broad applications across control engineering, signal processing, econometrics, and other scientific disciplines. Over the years, various system identification methods have been developed to estimate SSMs from experimental data under different settings. These include \emph{subspace identification} approaches, such as \cite{van1994n4sid,campi2006iterative}; \emph{prediction-error} or \emph{maximum-likelihood} methods \cite{mckelvey2004data,simpkins2012system,sato2019riemannian}; and \emph{Bayesian} techniques that quantify uncertainties in model parameters \cite{chiuso2016regularization, ljung2020shift}. Additionally, \emph{sequential Monte Carlo} (SMC) methods \cite{schon2015sequential} have demonstrated promise for complex inference tasks for SSMs. Moreover, the role of \emph{uncertainty quantification} has been gaining attention in both classical and Bayesian identification settings \cite{GRES2022108581,chiuso2016regularization}.

In many real-world applications, one may have strong prior knowledge that the underlying physical or engineered system must satisfy certain stability criteria. When such information is ignored, the identification method may yield models that violate the known stability criteria, thus resulting in poor performance and unreliable long-term predictions. Several approaches have been proposed in the literature to ensure stability or robust performance in the identified models. Constrained optimization strategies have been introduced for deterministic discrete-time systems \cite{lacy2003subspace}, while specialized regularization techniques have been employed for stochastic discrete-time SSMs \cite{van2001identification}. {Stability-constrained subspace methods have been discussed in, e.g., \cite{boots2007constraint, obara2023stable}, while kernel-based regularization techniques guarantee stability for deterministic continuous- and discrete-time systems in \cite{CHEN2018109}. Direct parametrization approaches have been developed to ensure asymptotic stability in deterministic autonomous systems \cite{obara2023stable} and quadratic stability in input–output deterministic discrete-time linear parameter-varying systems \cite{kon2024unconstrained}.
Yet, methods ensuring stability for continuous-time stochastic SSMs with state and input-dependent (multiplicative) noise are scarce despite their prevalence in diverse applications \cite{Network,DONNET2013929} and the control literature \cite{gravell2020learning}.}

In this paper, we address this gap by proposing a \emph{direct parametrization} for linear continuous-time \emph{stochastic} SSMs that automatically enforces \emph{external stability} (i.e., bounded input-output behavior). Our approach exploits the \emph{stochastic Bounded-Real Lemma (BRL)} , ensuring a finite input-output $\mathcal{L}^2$ gain. We additionally utilize the direct parametrization to construct \emph{probabilistic priors} for these externally stable SSMs by assigning distributions over the parameters in a manner that guarantees the external stability property \emph{almost surely}. This enables seamless integration into a Bayesian inference workflow, where standard sampling algorithms can be used without the need for rejection sampling or ad-hoc stability penalties. Moreover, our construction allows one to encode prior knowledge not only about stability but also about decay rates, oscillatory modes, and noise characteristics.

Overall, our contributions are threefold: (i) We derive a direct parametrization for continuous‑time stochastic SSMs with state‑ and input‑dependent noise that inherently satisfies the stochastic bounded‑real lemma; (ii) We design probabilistic priors over the parametrized model matrices using distributions that enforce stability and discuss sampling techniques from these distributions; and (iii) We demonstrate, through a simulation example, that the proposed framework yields reliable and stable predictions even when extrapolating beyond the data used for inference.
\section{Setup}
The description and definitions in this section are a summary of \cite{stochastic_Hinfty}.
In this paper, we consider linear stochastic SSMs in the form 
\begin{subequations}
\label{eq:SSM}
\begin{align}
\label{eq:sde}
&dx(t) = \left(A x(t) +Bu(t) \right)dt + \left[F x(t)~Gu(t)\right]dw(t),\\
\label{eq:output}
&y(t) = C x(t) + D u(t),
\end{align}
\end{subequations}
where \(A \in \mathbb{R}^{n\times n}, F\in \mathbb{R}^{n\times n}, G \in \mathbb{R}^{n \times \ell}, B \in \mathbb{R}^{n \times \ell}, C\in \mathbb{R}^{q \times n}, D \in \mathbb{R}^{q \times \ell})\) and \(w=\left[w_1~w_2\right]^\top\) with \(w_1\) and \(w_2\) being real scalar Wiener processes on a probability space \((\Omega, \mathcal{F}, \mathbb{P})\) relative to an increasing family
\((\mathcal{F}_t)_{t \ge 0}\) of \(\sigma\)-algebras \(\mathcal{F}_t \subset \mathcal{F}\). { All stochastic integrals are understood in the Itô sense.}
For the Wiener processes, we assume that \(\mathbb{E}\bigl[(w(s) - w(t))(w(s) - w(t))^\top\bigr] = \begin{bmatrix}
    1 & \rho\\ \rho & 1
\end{bmatrix}(s-t), \)
with \(\rho\in[-1,1]\).
We let \(\mathcal{L}^2(\Omega,\mathbb{R}^k)\) denote the space of square-integrable
$\mathbb{R}^k$-valued functions on the probability space \((Ω, \mathcal{F}, \mathbb{P})\) i.e., $f\in \mathcal{L}^2(\Omega,\mathbb{R}^k)$ if $\|f\|_{\mathcal{L}^2}=\mathbb{E}[{f^\top f}]<\infty$. For \(T\in\mathbb{R}_{>0}\), we denote by \(\mathcal{L}^2_w\left([0~T];\mathcal{L}^2(\Omega,\mathbb{R}^k)\right)\) the space of nonanticipative stochastic processes \(z(\cdot)=\left(z(t)\right)_{t\in[0~T]}\) with respect to \(\left(\mathcal{F}_t\right)_{t\in [0,T]}\) satisfying
\[
\|z\|^2_{\mathcal{L}^2_w} = \int^T_{0} 
\|z(t)\|_{\mathcal{L}^2}dt
<\infty.
\]
For any \(u\in\mathcal{L}^2_w([0~T];\mathcal{L}^2(\Omega,\mathbb{R}^\ell))\), and \(x_0\in \mathbb{R}^n\), we denote \(x(t;x_0,u)\) and \(y(t;x_0,u)\) as the solution for \eqref{eq:SSM} with initial condition \(x(0)=x_0\).
The Stochastic Differential Equation (SDE) \eqref{eq:sde} with initial condition \(x(0)=x_0\) and input \(u\in\mathcal{L}_{w}([0~T];\mathcal{L}^2(\Omega,\mathbb{R}^\ell))\) has a unique solution \(x(\cdot)=x(\cdot;x_0,u)\in \mathcal{L}_{w}([0,T],\mathcal{L}^2(\Omega,\mathbb{R}^n))\) \cite{stochastic_Hinfty}. We now define \emph{internal stability} for the SSM in \eqref{eq:SSM}. 
\begin{definition}[Internal Stability]
    The SSM in \eqref{eq:SSM} is said to be \emph{internally stable} if there exist constants \(c,\alpha>0\) such that the unforced solution of the SDE \eqref{eq:sde} \(x(\cdot)=x(\cdot; x_0, 0)\) (\(u=0\)) with \(x(0)=x_0\in\mathbb{R}^n\) satisfies
    \begin{equation}
    \label{eq:decay_rate}
    \mathbb{E}\left[\|x(t)\|^2\right]\leq c\exp(-\alpha t)\|x_0\|^2,~\text{for all } t>0.
    \end{equation}
    In other words, the SDE \eqref{eq:sde} is mean-squared exponentially stable. In addition, we call the pair \((A,F)\) a stable pair.
\end{definition}
\begin{proposition}[\cite{stochastic_Hinfty}]
\label{prob:Stable_AF}
    For \(Q\in \Splus\), the SSM in \eqref{eq:SSM} is internally stable if and only if there exists \(P\in\Splus\) such that
    \begin{equation}
    \label{eq:Lyap_AF}
    A^{\top}P+PA + F^\top P F = - Q
    \end{equation}
\end{proposition}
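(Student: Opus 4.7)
The plan is to prove both implications via an infinitesimal-generator argument on the Lyapunov candidate $V(x) = x^\top P x$. Since internal stability only constrains the unforced dynamics, I would set $u \equiv 0$ throughout, so that the diffusion coefficient in \eqref{eq:sde} reduces to $Fx(t)\,dw_1$.

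\emph{Sufficiency} ($\Leftarrow$): Assume $P, Q \in \Splus$ satisfy \eqref{eq:Lyap_AF}. Applying the Itô formula to $V(x(t))$ along the unforced solution of \eqref{eq:sde}, and using $\nabla V = 2Px$ together with $\nabla^2 V = 2P$, I obtain
\begin{equation*}
dV(x(t)) = x(t)^\top\!\bigl(A^\top P + P A + F^\top P F\bigr)x(t)\,dt + 2x(t)^\top P F x(t)\,dw_1(t) = -x(t)^\top Q x(t)\,dt + 2x(t)^\top PFx(t)\,dw_1(t).
\end{equation*}
After a standard stopping-time localization to ensure the Itô term has zero expectation, taking expectations yields $\tfrac{d}{dt}\mathbb{E}[V(x(t))] = -\mathbb{E}[x(t)^\top Q x(t)] \le -\tfrac{\lambda_{\min}(Q)}{\lambda_{\max}(P)}\mathbb{E}[V(x(t))]$. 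Gronwall's inequality combined with the bracket $\lambda_{\min}(P)\|x\|^2 \le V(x) \le \lambda_{\max}(P)\|x\|^2$ then delivers the mean-square exponential bound \eqref{eq:decay_rate} with $\alpha = \lambda_{\min}(Q)/\lambda_{\max}(P)$ and $c = \lambda_{\max}(P)/\lambda_{\min}(P)$.

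\emph{Necessity} ($\Rightarrow$): Given internal stability and any $Q \in \Splus$, I would define
\begin{equation*}
\tilde V(x_0) = \int_0^\infty \mathbb{E}\!\left[x(t;x_0,0)^\top Q\,x(t;x_0,0)\right]dt.
\end{equation*}
By linearity of the unforced flow, $x(t;x_0,0) = \Phi(t)x_0$ where $\Phi$ is the matrix-valued fundamental solution of $d\Phi = A\Phi\,dt + F\Phi\,dw_1$ with $\Phi(0) = I$. Hence $\tilde V(x_0) = x_0^\top P x_0$ with $P = \int_0^\infty \mathbb{E}[\Phi(t)^\top Q\,\Phi(t)]\,dt$. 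The integral converges by \eqref{eq:decay_rate}, the matrix $P$ is symmetric since $Q$ is, and $P \in \Splus$ follows because the integrand is continuous in $t$ and equals $x_0^\top Q x_0 > 0$ at $t=0$. To obtain \eqref{eq:Lyap_AF}, I would use the Markov property to split
\begin{equation*}
\tilde V(x_0) = \int_0^\varepsilon \mathbb{E}\!\left[x(t;x_0,0)^\top Q\,x(t;x_0,0)\right]dt + \mathbb{E}\!\left[\tilde V(x(\varepsilon;x_0,0))\right],
\end{equation*}
and then differentiate in $\varepsilon$ at $0$: the first term contributes $x_0^\top Q x_0$, while Dynkin's formula applied to the quadratic form $\tilde V$ shows the second contributes $x_0^\top(A^\top P + PA + F^\top PF)x_0$. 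Equating and using that $x_0$ is arbitrary yields $A^\top P + PA + F^\top PF = -Q$.

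\emph{Main obstacle}: The principal technical care lies in the Itô step of the sufficiency direction: a priori the term $2x^\top PFx\,dw_1$ is only a local martingale, so the zero-expectation conclusion requires truncating with a sequence of exit times $\tau_n \uparrow \infty$, applying Optional Stopping at each $\tau_n$, and then passing to the limit via monotone/dominated convergence. Once this is dispatched, both directions reduce to standard semigroup manipulations and finite-dimensional linear algebra.
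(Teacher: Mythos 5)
The paper states this proposition without proof, citing \cite{stochastic_Hinfty}, so there is no in-paper argument to compare against; your reconstruction is correct and is essentially the standard one from that literature (It\^o's formula plus Gronwall on $V(x)=x^\top Px$ for sufficiency, and the observability-Gramian-type integral $P=\int_0^\infty \mathbb{E}[\Phi(t)^\top Q\,\Phi(t)]\,dt$ with a Dynkin/Markov splitting for necessity). Your handling of the details is sound: setting $u\equiv 0$ correctly kills the $w_2$ channel so the correlation $\rho$ plays no role, the localization of the local martingale term is the right fix, and passing from the quadratic-form identity to the matrix equation is legitimate since both $Q$ and $A^\top P+PA+F^\top PF$ are symmetric.
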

Note that if \(P\) exists then it is unique, and when \(Q=\alpha P\), the unforced solution of the SDE \eqref{eq:sde} \(x(\cdot)=x(\cdot; x_0, 0)\) will satisfy 
\(\mathbb{E}\left[\|x(t)\|^2\right]\leq \frac{\lambda_{\mathrm{max}}(P)}{\lambda_{\mathrm{min}}(P)}\exp(-\alpha t)\|x_0\|^2,~\text{for all } t>0.\)

\begin{definition}[External Stability]
The SSM \eqref{eq:SSM} is said to be \emph{externally stable} or $\mathcal{L}^2$ \emph{input-output stable} if, for every \(u \in \mathcal{L}_w^2({\mathbb{R}_{\geq 0}}; \mathcal{L}^2(\Omega,\mathbb{R}^\ell))\), we have \(y(\cdot)=y(\cdot;0,u) \in \mathcal{L}_w^2({\mathbb{R}_{\geq 0}}; \mathcal{L}^2(\Omega,\mathbb{R}^q))\)
and there exists a constant $\gamma \ge 0$ such that 
\begin{equation}
\label{eq:BIBO}
\|y(t)\|_{\mathcal{L}_w^2} \leq \gamma \,\|u(t)\|_{\mathcal{L}_w^2}.
\end{equation}
\end{definition}
If the SSM \eqref{eq:SSM} is externally stable, then we can define an input-output operator \(\mathbb{L}: \mathcal{L}_w^2({\mathbb{R}_{\geq 0}}; \mathcal{L}^2(\Omega,\mathbb{R}^\ell)) \!\to\! \mathcal{L}_w^2({\mathbb{R}_{\geq 0}};\mathcal{L}^2(\Omega,\mathbb{R}^q)),\) via \(u\!\mapsto \!y(\cdot;0,u)\) (see \cite{stochastic_Hinfty}). The induced norm on this operator \(\|\mathbb{L}\|:= \sup_{\|u\|_{\mathcal{L}^2_w}=1}\|y(\cdot;0,u)\|_{\mathcal{L}^2_{w}}\) represents the minimal \(\gamma>0\) such that \eqref{eq:BIBO} is satisfied. This norm is analogous to the \(\mathcal{H}_{\infty}\)-norm. Therefore, it is often referred to as the stochastic \(\mathcal{H}_{\infty}\)-norm. In \cite{stochastic_Hinfty}, it is shown that if \eqref{eq:SSM} is internally stable, then it is also externally stable and there exists a constant \(\gamma>0\) such that \(\|\mathbb{L}\|<\gamma\).

\begin{theorem}[Stochastic Bounded Real Lemma \cite{stochastic_Hinfty}]
\label{th:Stochastic_Bounded_Real_Lemma}
For \(\gamma>0\), the following statements are equivalent 
\begin{enumerate}
    \item The SSM in \eqref{eq:SSM} is internally stable and \(\|\mathbb{L}\|< \gamma\).
    \item There exists \(P\in \Splus\) such that 
    \begin{equation}
    \label{eq:LMI_BRL}
        M:=\begin{bmatrix}
M_{11}
& M_{12} \\
M^\top_{12} 
& M_{22} 
\end{bmatrix}<0
    \end{equation}
\end{enumerate}
with \(\mathrm{M}_{11}= PA + A^\top P + F^\top PF + C^\top C, M_{12}=PB + \rho F^\top PG + C^\top D,\) and \(M_{22}= D^\top D - \gamma^2 I + G^\top PG\).
\end{theorem}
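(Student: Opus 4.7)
The plan is to prove the two implications separately, following the classical dissipativity/storage-function pattern. Sufficiency $(2)\Rightarrow(1)$ will be a direct Itô-calculus computation on a quadratic Lyapunov candidate, while necessity $(1)\Rightarrow(2)$ requires constructing $P$ via an optimization argument and is the main obstacle.

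For sufficiency, I would take $P\in\Splus$ with $M<0$ and define the candidate storage function $V(x)=x^\top P x$. Applying Itô's formula to $V(x(t))$ along \eqref{eq:sde}, the diffusion matrix $[Fx\ Gu]$ combined with the increment covariance of $w$ contributes exactly the cross terms $x^\top F^\top P F x + 2\rho\, x^\top F^\top P G u + u^\top G^\top P G u$. Together with the drift $2x^\top P(Ax+Bu)$ and the output penalty $\|y\|^2-\gamma^2\|u\|^2$, the entire expression collapses into the single quadratic form $[x^\top\ u^\top]\,M\,[x^\top\ u^\top]^\top$. Starting from $x(0)=0$ and taking expectations (which removes the martingale part), the strict inequality $M<0$ produces, for some $\epsilon>0$,
\begin{equation*}
\mathbb{E}[V(x(T))] + \int_0^T \mathbb{E}\bigl[\|y\|^2-\gamma^2\|u\|^2\bigr]\,ds \;\leq\; -\epsilon \int_0^T \mathbb{E}\|u\|^2\,ds,
\end{equation*}
and since $V\geq 0$, letting $T\to\infty$ yields $\|\mathbb{L}\|<\gamma$. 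Setting $u\equiv 0$, the negative-definite $(1,1)$-block of $M$ forces $A^\top P+PA+F^\top PF <-C^\top C\leq 0$, so Proposition~\ref{prob:Stable_AF} applies with $Q:=-(A^\top P+PA+F^\top PF)\in\Splus$ and delivers internal stability.

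For necessity, I would introduce the worst-case storage function
\begin{equation*}
J(x_0) := \sup_{u\in\mathcal{L}^2_w} \int_0^\infty \mathbb{E}\bigl[\|y(t;x_0,u)\|^2-\gamma^2\|u(t)\|^2\bigr]\,dt.
\end{equation*}
Internal stability makes the integral converge for $u\equiv 0$, and the gap $\|\mathbb{L}\|<\gamma$ together with linearity of $u\mapsto y(\cdot;x_0,u)$ bounds the supremum uniformly; bilinearity of the dynamics and the quadratic cost then imply $J(x_0)=x_0^\top P x_0$ for some symmetric $P\succeq 0$. A Bellman dynamic-programming argument yields a stochastic Hamilton--Jacobi--Bellman equation for $V(x)=x^\top P x$; pointwise maximization over $u$ requires $G^\top P G - \gamma^2 I \prec 0$, and eliminating the optimal $u^\star$ via first-order conditions turns the HJB equation into a stochastic algebraic Riccati equation whose Schur complement with respect to its (negative-definite) $(2,2)$-block is precisely $M<0$. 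Strict positivity $P\succ 0$ would follow from an observability-type argument on the pair $(A,C)$.

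The hard part will be the necessity direction: making the variational/DP argument rigorous in the Itô setting requires showing that $J$ is genuinely finite and quadratic, justifying the interchange of limits when passing to the HJB equation, and verifying that the maximizer $u^\star$ lies in $\mathcal{L}^2_w$. A cleaner route, pursued in \cite{stochastic_Hinfty}, is to instead establish solvability of the associated generalized stochastic Riccati equation directly and then pass to the LMI form via Schur complement, bypassing the explicit control-theoretic variational construction.
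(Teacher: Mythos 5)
First, a point of reference: the paper does not prove Theorem~\ref{th:Stochastic_Bounded_Real_Lemma} at all --- it is imported verbatim from \cite{stochastic_Hinfty} --- so there is no internal proof to compare against, and your attempt has to be judged on its own merits. On those terms, your sufficiency direction \((2)\Rightarrow(1)\) is correct and complete in outline: the Itô correction for the diffusion \([Fx~Gu]\,dw\) with increment covariance \(\begin{bmatrix}1&\rho\\ \rho&1\end{bmatrix}\) does produce exactly \(x^\top F^\top PFx+2\rho\,x^\top F^\top PGu+u^\top G^\top PGu\), the three blocks assemble into \(M\) of \eqref{eq:LMI_BRL}, and the dissipation inequality with \(x(0)=0\) gives \(\|\mathbb{L}\|\le\sqrt{\gamma^2-\epsilon}<\gamma\); internal stability via \(Q:=-(A^\top P+PA+F^\top PF)\in\Splus\) and Proposition~\ref{prob:Stable_AF} is also right.

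The genuine gap is the necessity direction, which you sketch but do not prove, and two specific steps in your sketch would fail as stated. First, the variational construction \(J(x_0)=\sup_u\int_0^\infty\mathbb{E}[\|y\|^2-\gamma^2\|u\|^2]\,dt\) can at best deliver a \emph{nonstrict} inequality \(M\preceq 0\) with \(P\succeq 0\): suprema do not see the strict gap, so recovering the strict LMI requires an explicit perturbation argument (e.g., applying the nonstrict result to the system with output augmented by \(\varepsilon x\), or to gain level \(\gamma-\delta\), exploiting \(\|\mathbb{L}\|<\gamma\)), which your proposal never performs. Second, your claim that \(P\succ0\) ``would follow from an observability-type argument on the pair \((A,C)\)'' is not available: the theorem makes no observability assumption (the paper only remarks separately that stabilizability and detectability are assumed for \eqref{eq:LMI_BRL}), and in \cite{stochastic_Hinfty} positivity and strictness come out of the theory of generalized stochastic Riccati equations, not from observability of \((A,C)\). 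Beyond these, the analytic core --- finiteness and quadraticity of \(J\) for an indefinite maximization problem with multiplicative noise, well-posedness of the HJB/Riccati step, \(\gamma^2I-G^\top PG\succ0\), and admissibility of the maximizing feedback --- is exactly the hard part, and you concede it by deferring to the reference. As written, the proposal therefore establishes only \((2)\Rightarrow(1)\); the equivalence is not proved.
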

Note that for \eqref{eq:LMI_BRL} to be true, the SSM \eqref{eq:SSM} must be internally stable \cite{stochastic_Hinfty}. Additionally, for an SSM to satisfy the LMI \eqref{eq:LMI_BRL}, it is assumed to be stabilizable and detectable. 
\section{Parametrizations and Distributions}
In this section, we will describe probability distributions for the matrices of the SSM in \eqref{eq:SSM} defined on a distinct probability space \((\Omega', \mathcal{F}', \mathbb{P}')\) separate from the original space \((\Omega, \mathcal{F}, \mathbb{P})\) associated with the Wiener processes \(w_1, w_2\). Specifically, we first construct a probability distribution over stable \((A,F)\) pairs by requiring that the Lyapunov inequality \eqref{eq:Lyap_AF} holds almost surely under \(\mathbb{P}'\). Next, we define a probability distribution for the full SSM matrices \((A,B,C,D,F,G)\) and \(\gamma\) such that the LMI condition \eqref{eq:LMI_BRL} in Theorem \ref{th:Stochastic_Bounded_Real_Lemma} is satisfied almost surely.

\subsection{Distributions over stable \((A,F)\) pairs}
\label{sec:AF-distributions}
We first state the following proposition which provides a direct parametrization for all stable \((A,F)\) pairs.
\begin{proposition}
    \label{prob:Stable_AF_param}
    For every stable pair \((A,F)\), there exist
    \(P,Q\in \Splus\), \(S\in \Sskew\), and \(\tilde{F} \in \mathbb{R}^{n\times n}\) such that
    \begin{equation}
    \label{eq:Stable_AF_param}
    A= -\frac{1}{2}P^{-1}(Q+\tilde{F}^\top\tilde{F}+S),~F = L_{P^{-1}}\tilde{F},
    \end{equation}
    where \(L_{P^{-1}}\) is the lower-triangular Cholesky factor of \(P^{-1}\). Conversely, if \(A\) and \(F\) are defined by \eqref{eq:Stable_AF_param} for some \(P,Q\in \Splus\), \(S\in \Sskew\), and \(\tilde{F}\in \R^{n\times n}\), then the pair \((A,F)\) are stable.
\end{proposition}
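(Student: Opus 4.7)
The plan is to use Proposition~\ref{prob:Stable_AF} as the bridge between the claimed parametrization and mean-square stability. The stability of $(A,F)$ is equivalent to the existence of $P,Q\in\Splus$ solving the Lyapunov equation $A^{\top}P+PA+F^{\top}PF=-Q$, so both directions reduce to algebraic manipulations that convert this equation into the form \eqref{eq:Stable_AF_param} and back.

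For the forward direction, I would start with a stable pair $(A,F)$ and invoke Proposition~\ref{prob:Stable_AF} to produce $P,Q\in\Splus$ satisfying the Lyapunov equation. The first key move is to absorb $P$ into $F$ via the Cholesky factor: set $\tilde F = L_{P^{-1}}^{-1}F$, so by definition $F=L_{P^{-1}}\tilde F$. Since $P^{-1}=L_{P^{-1}}L_{P^{-1}}^{\top}$, I get $F^{\top}PF=\tilde F^{\top}L_{P^{-1}}^{\top}PL_{P^{-1}}\tilde F=\tilde F^{\top}\tilde F$, and the Lyapunov equation simplifies to $PA+A^{\top}P=-(Q+\tilde F^{\top}\tilde F)$. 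The second key move is to split $PA$ into its symmetric and skew-symmetric components. Defining $S:=A^{\top}P-PA\in\Sskew$, the decomposition $PA=\tfrac{1}{2}(PA+A^{\top}P)-\tfrac{1}{2}S$ yields $PA=-\tfrac{1}{2}(Q+\tilde F^{\top}\tilde F+S)$; multiplying on the left by $P^{-1}$ recovers \eqref{eq:Stable_AF_param}.

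For the converse, I would take arbitrary $P,Q\in\Splus$, $S\in\Sskew$, $\tilde F\in\mathbb{R}^{n\times n}$, define $A$ and $F$ by \eqref{eq:Stable_AF_param}, and verify the Lyapunov condition directly. Left-multiplying the expression for $A$ by $P$ gives $PA=-\tfrac{1}{2}(Q+\tilde F^{\top}\tilde F+S)$; transposing and using $Q^{\top}=Q$, $(\tilde F^{\top}\tilde F)^{\top}=\tilde F^{\top}\tilde F$, $S^{\top}=-S$ gives $A^{\top}P=-\tfrac{1}{2}(Q+\tilde F^{\top}\tilde F-S)$. Adding these and combining with $F^{\top}PF=\tilde F^{\top}\tilde F$ (again via the Cholesky identity $L_{P^{-1}}^{\top}PL_{P^{-1}}=I$) produces $A^{\top}P+PA+F^{\top}PF=-Q$ with $Q\in\Splus$, so Proposition~\ref{prob:Stable_AF} concludes that $(A,F)$ is stable.

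I do not expect a genuine obstacle: both directions are essentially exercises in linear algebra once one sees the two enabling tricks, namely (i) the whitening substitution $\tilde F=L_{P^{-1}}^{-1}F$ that turns the quadratic term $F^{\top}PF$ into $\tilde F^{\top}\tilde F$, and (ii) the symmetric/skew decomposition of $PA$ which converts the Lyapunov equation (a symmetric relation) into an equation for the generally non-symmetric matrix $PA$, at the cost of introducing the free skew parameter $S$. The only minor point to remark is that $L_{P^{-1}}$ is invertible because $P^{-1}\in\Splus$, which makes the map $\tilde F\mapsto L_{P^{-1}}\tilde F$ a bijection on $\mathbb{R}^{n\times n}$ and hence shows the parametrization is surjective onto all stable pairs.
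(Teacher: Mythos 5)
Your proof is correct and takes essentially the same route as the paper's: both invoke Proposition~\ref{prob:Stable_AF}, use the whitening substitution \(\tilde F = L_{P^{-1}}^{-1}F\) to turn \(F^\top P F\) into \(\tilde F^\top \tilde F\), identify \(S = A^\top P - PA\) as the free skew-symmetric part of \(PA\), and handle the converse by direct substitution into \eqref{eq:Lyap_AF}. Your write-up merely makes explicit some details the paper leaves implicit, such as the identity \(L_{P^{-1}}^\top P L_{P^{-1}} = I\) and the invertibility of \(L_{P^{-1}}\).
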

{\begin{proof}
Let \((A,F)\) be a stable pair, then according to Proposition \ref{prob:Stable_AF}, for any given \(Q\in \Splus\), there exists a \(P\in \Splus\) such that 
\(-Q = A^{\top}P+PA + F^\top P F 
=\tilde{A}^\top + \tilde{A} + \tilde{F}^\top \tilde{F}\)
where the second equality follows by defining \(\tilde{A}:=PA\) and \(\tilde{F}:=L_{P^{-1}}^{-1}F\). With \(S\in \Sskew\) representing the free skew-symmetric part of \(\tilde{A}\) it follows that \(\tilde{A}=-\frac{1}{2}\left(Q+\tilde{F}^\top \tilde{F} + S\right)\) and therefore \(A=-\frac{1}{2}P^{-1}\left(Q+\tilde{F}^\top \tilde{F} + S\right)\). The converse can be shown by a direct substitution of \eqref{eq:Stable_AF_param} into \eqref{eq:Lyap_AF}.
\end{proof}}
It is important to remark that according to Proposition \ref{prob:Stable_AF}, the matrix \(Q\) can be fixed to any value. If we fix \(Q=\alpha P\), then we can obtain a parametrization which also considers the decay rate \(\alpha \) in \eqref{eq:decay_rate}. Additionally, {note} that for any eigenvalue \(\lambda\in \mathbb{C}\) of \(A\) with eigenvector \(v\in \mathbb{C}^n\), we have { \(-\frac{1}{2}P^{-1}(Q+ \tilde{F}^\top \tilde{F}+ S)v =\lambda v \Rightarrow \frac{-1}{2}(Q+\tilde{F}^\top \tilde{F}+S)v= \lambda Pv\), which means}
\begin{equation}
    \lambda = \frac{-1}{2}\left(\frac{v^\dagger (Q+\tilde{F}^\top \tilde{F})v}{v^\dagger P v}+\frac{v^\dagger Sv}{v^\dagger P v} \right),
\end{equation}
with \(\mathrm{Re}(\lambda)\!=\! \frac{-v^\dagger (Q+\tilde{F}^\top \tilde{F})v}{2v^\dagger P v}\), and \(\mathrm{Im}(\lambda)\!=\! \frac{-v^\dagger Sv}{2v^\dagger P v}\).
Based on the parametrization in \eqref{eq:Stable_AF_param}, we can define a distribution on stable \((A,F)\) pairs by defining a joint distribution over the parameters \(P,\tilde{F},S\), and either fixing \(Q\in \Splus\), include it as a random matrix with the joint distribution with the other parameters, or letting \(Q=\alpha P\) and including \(\alpha\) in the joint distribution for the other parameters. In this paper, we will focus on three possible distributions for \((A,F)\).
\begin{definition}
    \label{def:WSN-distributions}
    Let \(P^{-1}\sim \mathrm{Wishart}(k_p,\Sigma_p)\) with \(k_p\geq n\), \(\mathrm{vec}(\tilde{F})\sim \mathcal{N}(\mu_f,\Sigma_f) \), and \(\mathrm{veck}(S)\sim \mathcal{N}(\mu_s,\Sigma_s)\) with \(P^{-1},\tilde{F}\) and \(S\) being independent. Then we say the pair \((A,F)\) follows
    \begin{itemize}
        \item The Wishart-Normal-Skew (WNS) distribution if \(Q\in\Splus \) is fixed and \((A,F)\) are obtained as in \eqref{eq:Stable_AF_param}.
        \item The \(Q\)-WNS distribution if \(Q\sim \mathrm{Wishart}(k_q,\Sigma_q) \) independently from \(P^{-1},S,\tilde{F}\), and \((A,F)\) are obtained as in \eqref{eq:Stable_AF_param}.
        \item The \(\alpha P\)-WNS if \(\alpha\sim \mathrm{Gamma}(k_\alpha,\theta_\alpha )\) independently from \(P^{-1},S,\tilde{F}\), \(Q=\alpha P\), and \((A,F)\) are obtained as in \eqref{eq:Stable_AF_param}.
    \end{itemize}
\end{definition}
\begin{remark}
    One reason to use the Wishart distribution for \(P^{-1}\sim \mathrm{Wishart}({k_p,\Sigma_p})
    \) is the fact that it can be sampled using the Bartlett decomposition \(P^{-1}=L_{\Sigma_p}\tilde{L}_{P^{-1}}\tilde{L}^\top_{P^{-1}}L^\top_{\Sigma_p}\), where \(L_{\Sigma_p}\) is the Cholesky factor of \(\Sigma_p\) and 
\begin{equation}
\label{eq:Bartlett}
\tilde{L}_{P^{-1}} = \begin{bmatrix}
c_{1} & 0            & \cdots & 0 \\
m_{21} & c_{2}       & \cdots & 0 \\
\vdots & \vdots  & \ddots & 0 \\
m_{n1} & m_{n2} & \cdots & c_{n},\end{bmatrix}
\end{equation}
where \(c_i^2 \sim \chi^2_{n-i+1}\) and \(m_{ij} \sim N(0,1)\) independently \cite{Bartlett}. Therefore, we can sample the Cholesky factor \(L_{P^{-1}}=L_{\Sigma_p}\tilde{L}_{P^{-1}}\) of \(P^{-1}\) without having to sample \(P^{-1}\) and then compute a Cholesky factor for it. This approach avoids the computational cost of calculating a Cholesky factor.
\end{remark}
The expectations of \(A\) and \(F\) are  
\begin{subequations}
    \label{eq:Expectations}
    \begin{align}
    \label{eq:Expt_A}
    &\mathbb{E}\left[A\right] = -\frac{1}{2}\left(\mathbb{E}\left[P^{-1}\right]\right)\left(\mathbb{E}\left[Q\right]+\mathbb{E}\left[\tilde{F}^\top\tilde{F}\right]+\mathbb{E}\left[S\right]\right),\\
    &\mathbb{E}\left[F\right] = \mathbb{E}\left [L_{P^{-1}}\right] \mathbb{E}\left[ \tilde{F}\right],
    \end{align}
\end{subequations}
where \(\mathbb{E}\left[P^{-1}\right]\!= \!k_p \Sigma_p, 
~\mathbb{E}\left[S\right]\!=\! \mathrm{veck}^{-1}(\mu_s),~\mathbb{E}[\tilde{F}]\!=\! \mathrm{vec}^{-1}(\mu_f), ~\mathbb{E}\left[L_{P^{-1}}\right]\!=\! L_{\Sigma_p}\mathsf{D}\) {(using \eqref{eq:Bartlett})}
with \(\Sigma_p \!=\!L_{\Sigma_p}L_{\Sigma_p}^\top\)
and \(\mathsf{D}\) a diagonal matrix with entries
\[\textstyle
\mathsf{D}_{ii}={\mathbb{E}[c_i]}=\sqrt{2}\frac{\Gamma\left(\frac{{n}-i+2}{2}\right)}{\Gamma\left(\frac{{n}-i+1}{2}\right)}. 
\]

Furthermore, we have
\(
\mathbb{E}[\tilde{F}^\top \tilde{F}]\!=\! \mathbb{E}[\tilde{F}]^\top\mathbb{E}[\tilde{F}]\!+\! \mathbf{\Sigma}^F,
\)
where \(\mathbf{\Sigma}^F_{ij}\!=\! \mathrm{tr}\Bigl(\Sigma^{ij}_f\Bigr)
\)
with \(\Sigma^{ij}_f\) denoting the \((ij)\)-th \(n\!\times \!n\) block of \(\Sigma_f\).

Finally, the expectation of \(Q\) is given by
\[
\mathbb{E}[Q] =
\begin{cases}
Q, & \text{(WSN)},\\
k_q\Sigma_q, & \text{(}Q\text{-WSN)},\\
 \frac{k_\alpha \theta_{\alpha}}{k_p-n-1} \Sigma^{-1}_{P}, & \text{(}\alpha P\text{-WSN}, k_p>n+1\text{)}.
\end{cases}
\]
The expectation formulas in \eqref{eq:Expectations} and the expression for the eigenvalues of \(A\) reveal useful insights into how one may choose the parameters for the proposed distributions. Recall that the real part of any eigenvalue \(\lambda\) of \(A\) in \eqref{eq:Stable_AF_param} depends on \(Q + \tilde{F}^\top\tilde{F}\), while the imaginary part is influenced by the skew-symmetric matrix \(S\). 
For instance, when adopting the WNS distribution, one fixes \(Q \in \Splus\) and draws samples for \(P^{-1}\), \(\tilde{F}\), and \(S\). This approach ties the mean-squared decay properties (through \(Q\)) to a known baseline while permitting flexibility in both the real part of the eigenvalues (through the distribution of \(P\) and \(\tilde{F}\)) and their oscillatory behavior (through \(S\)). Suppose one wants to quantify the uncertainty in the mean-squared decay properties with the WNS. In that case, it is possible by either modifying the uncertainty in \(P^{-1}\) through \(\Sigma_p\) and \(k_p\) or modifying the uncertainty in \(\tilde{F}\) through \(\Sigma_f\). However, this will also alter the uncertainty in the drift part's oscillatory behavior or the diffusion part's uncertainty. In contrast, the \(Q\)-WNS distribution treats \(Q\) itself as random, thereby enabling us to quantify the uncertainty in mean-square decay properties without affecting the uncertainty in the oscillatory properties or the diffusion term. 

Similarly, the \(\alpha P\)-WNS construction, where \(Q = \alpha P\) and \(\alpha \sim \mathrm{Gamma}(k_\alpha,\theta_\alpha)\) independently, allows us to directly randomize the decay rate \(\alpha\) without affecting the oscillatory properties and the diffusion term. Nevertheless, both \(Q\)-WNS and \(\alpha P\)-WNS become an over parametrization to the pair \((A,F)\) according to Proposition \ref{prob:Stable_AF}. 


\subsection{Distributions for SSMs satisfying the stochastic BRL}
We start by stating a parametrization for an SSM on the form \eqref{eq:SSM} to satisfy the stochastic BRL.
\begin{proposition}
\label{prob:BRL_param}
The matrices \(A,B,C,D,F,G\) of any SSM on the form \eqref{eq:SSM} which satisfies the stochastic BRL with \(\|L\|<\gamma\) can be written as 
\begin{subequations}
\label{eq:BRL_param}
\begin{align}
        \label{eq:AF_external_stable}
        & A \!=\! -\frac{1}{2}P^{-1}(Q+\tilde{F}^\top\tilde{F}+C^{\top} C + S), ~F\!=\!L_{P^{-1}} \tilde{F},\\
        \label{eq:G_external_stable}
        & G\!=\!L_{P^{-1}}\tilde{G},~B\!=\! P^{-1}\left(L_Q\tilde{B}-\rho \tilde{F}^\top \tilde{G} - C^\top D\right),
\end{align}
\end{subequations}
with \(P,Q\in \Splus\), \(S\in \Sskew\), \(\tilde{F} \in \mathbb{R}^{n\times n}\), \(C \in \mathbb{R}^{q\times n}\), \(D\in \mathbb{R}^{q\times \ell },~ \tilde{B},\tilde{G} \in \mathbb{R}^{n\times \ell}\) satisfying
\begin{equation}
    \label{eq:param_cond_BRL}
    \tilde{B}^\top \tilde{B} + D^\top D + \tilde{G}^\top \tilde{G} < \gamma^2 I,
    \end{equation}
and \(L_{P^{-1}},L_{Q}\) are the lower-triangular Cholesky factors of \(P^{-1}\) and \(Q\), respectively.
Conversely, any choice of \(P,Q\in \Splus\), \(S\in \Sskew\), \(\tilde{F} \in \mathbb{R}^{n\times n}\), \(C \in \mathbb{R}^{q\times n}\), and \(D\in \mathbb{R}^{q\times \ell },~ \tilde{B},\tilde{G} \in \mathbb{R}^{n\times \ell}\) satisfying condition \eqref{eq:param_cond_BRL} with \(A,F,G,B\) defined as in \eqref{eq:BRL_param} will yield an SSM satisfying the stochastic BRL with \(\|\mathbb{L}\|<\gamma\).
\end{proposition}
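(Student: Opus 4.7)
The plan is to mirror the structure of Proposition \ref{prob:Stable_AF_param} but applied to the full stochastic BRL LMI of Theorem \ref{th:Stochastic_Bounded_Real_Lemma} rather than only the Lyapunov equation: extract tilde variables from the solution of the LMI block by block, and then collapse the resulting $2\times2$ block matrix into a single inequality via a Schur complement.

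For the forward direction, suppose $(A,B,C,D,F,G)$ satisfies the stochastic BRL with $\|\mathbb{L}\|<\gamma$. Theorem \ref{th:Stochastic_Bounded_Real_Lemma} supplies a $P\in\Splus$ with $M<0$. I define $\tilde{F}:=L_{P^{-1}}^{-1}F$ and $\tilde{G}:=L_{P^{-1}}^{-1}G$, so that the identity $L_{P^{-1}}^\top P L_{P^{-1}}=I$ gives $F^\top P F=\tilde{F}^\top\tilde{F}$, $G^\top P G=\tilde{G}^\top\tilde{G}$, and $\rho F^\top P G=\rho\tilde{F}^\top\tilde{G}$. Then set $Q:=-(PA+A^\top P+F^\top PF+C^\top C)=-M_{11}$, which lies in $\Splus$ because $M_{11}$ is a principal block of a negative-definite matrix. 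With $S:=A^\top P-PA\in\Sskew$, combining the symmetric identity $PA+A^\top P=-(Q+\tilde{F}^\top\tilde{F}+C^\top C)$ with the skew identity defining $S$ solves for $A$ and produces \eqref{eq:AF_external_stable}. Finally, I let $\tilde{B}:=L_Q^{-1}M_{12}=L_Q^{-1}(PB+\rho F^\top PG+C^\top D)$; inverting for $B$ yields \eqref{eq:G_external_stable}. Substituting these assignments back, the LMI becomes
\begin{equation*}
M=\begin{bmatrix}-Q & L_Q\tilde{B}\\ \tilde{B}^\top L_Q^\top & D^\top D-\gamma^2 I+\tilde{G}^\top\tilde{G}\end{bmatrix}<0.
\end{equation*}
Since $-Q<0$, a Schur complement with respect to the upper-left block, together with $L_Q^\top Q^{-1}L_Q=I$, is equivalent to \eqref{eq:param_cond_BRL}, which therefore holds.

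For the converse, start from arbitrary $P,Q\in\Splus$, $S\in\Sskew$, $\tilde{F},\tilde{G},\tilde{B},C,D$ satisfying \eqref{eq:param_cond_BRL}, and define $A,B,F,G$ by \eqref{eq:BRL_param}. Direct substitution into $M_{11},M_{12},M_{22}$, using the same four Cholesky identities, reproduces exactly the block matrix displayed above. Running the Schur-complement argument in the reverse direction then yields $M<0$, and Theorem \ref{th:Stochastic_Bounded_Real_Lemma} delivers internal stability together with $\|\mathbb{L}\|<\gamma$.

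The main obstacle is essentially bookkeeping: one has to verify that the Cholesky factors $L_{P^{-1}}$ and $L_Q$ perform the right normalization so that (i) $F^\top PF$, $G^\top PG$, and $\rho F^\top PG$ reduce to unweighted products of the tilde variables, and (ii) the off-diagonal block $L_Q\tilde{B}$ collapses under the Schur-complement step via $L_Q^\top Q^{-1}L_Q=I$. Once these identities are in hand, the remainder is the familiar BRL Schur-complement calculation. The only subtle point in the forward direction is confirming that the extracted $Q=-M_{11}$ genuinely lies in $\Splus$, which is immediate from the principal-block argument but worth recording explicitly so that $L_Q$ is well defined.
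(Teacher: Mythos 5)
Your proof is correct and follows essentially the same route as the paper's: apply the Schur complement to the BRL LMI, set \(Q:=-M_{11}\in\Splus\), and extract \(\tilde{F},\tilde{G},\tilde{B}\) via the Cholesky factors \(L_{P^{-1}}\) and \(L_Q\) so that \eqref{eq:param_cond_BRL} emerges from the Schur-complement condition, with the converse by direct substitution. The only cosmetic difference is that you re-derive the form of \(A\) inline by explicitly setting \(S:=A^\top P-PA\), whereas the paper delegates that step to Proposition~\ref{prob:Stable_AF_param}; the underlying computation is identical.
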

\begin{proof}
Using Theorem \ref{th:Stochastic_Bounded_Real_Lemma} and the Schur complement, we know that 
\begin{equation}
\label{eq:Shur_LMI}
    M<0 \Leftrightarrow M_{11}< 0, ~  M_{22}-M^\top_{12}M^{-1}_{11}M_{12}<0.
\end{equation}
{ Now define \(Q\in \Splus\) by \(-Q = M_{11}=PA+A^\top P + F^\top P F + C^\top C\). Then \(PA+A^\top P + F^\top P F =-(Q + C^\top C)\) is a Lyapunov equation for internal stability \eqref{eq:Lyap_AF}.} Therefore, using Proposition \ref{prob:Stable_AF_param}, we can write \(A=-\frac{1}{2}P^{-1}(Q+C^\top C + \tilde{F}^\top \tilde{F}+S)\) with \(\tilde{F}\in \mathbb{R}^{n\times n}\), \(S\in \Sskew\), and \(F=L_{P^{-1}}\tilde{F}\).


We now show that for any SSM with matrices satisfying \eqref{eq:LMI_BRL}, we can find \(\tilde{B},\tilde{G}\) with \(\tilde{B}^\top \tilde{B}+D^\top D+\tilde{G}^\top\tilde{G}<\gamma^2 I\). Using \eqref{eq:Shur_LMI}, we know that \(M_{22}-M^\top_{12}M_{11}^{-1}M_{12}<0 \) and \(M_{11}\!=\!-Q\!<\!0\). Now let \(\tilde{B} = L^{-1}_{Q}\left(PB + \rho \tilde{F}^\top \tilde{G} + C^\top D\right)\) and {\(\tilde{G}=L^{-1}_{P^{-1}}G\)}, then \eqref{eq:G_external_stable} follows, and by substituting \(\tilde{F},\tilde{G},\tilde{B}\) in {\(M_{22}-M_{12}^\top M^{-1}_{11}M_{12}\)}, we get the condition \eqref{eq:param_cond_BRL}. 
For the converse, direct substitution of \(A,F,G,B\) defined in \eqref{eq:BRL_param} with any \(P,Q\in \Splus\), \(S\in \Sskew\), \(\tilde{F} \in \mathbb{R}^{n\times n}\), \(C \in \mathbb{R}^{q\times n}\), and \(D\in \mathbb{R}^{q\times \ell },~ \tilde{B},\tilde{G} \in \mathbb{R}^{n\times \ell}\) satisfying the condition \eqref{eq:param_cond_BRL}, will produce a matrix \(M\) as in  \eqref{eq:LMI_BRL} satisfying \(M < 0\).
\end{proof}
{Note that the $Q$ in Proposition \ref{prob:BRL_param} is a positive‑definite slack matrix that is different from the $Q$ in Proposition~\ref{prob:Stable_AF_param}.}

Using the parametrization in Proposition \ref{prob:BRL_param}, we now construct probability distributions for the matrices of the SSM in \eqref{eq:SSM} satisfying the stochastic BRL with \(\|\mathbb{L}\|< \gamma\). To do so, we assign a joint distribution to the free parameters \(P\), \(Q\), \(S\), \(\tilde{F}\), \(\tilde{G}\), \(C\), and \(D\) (as they appear in \eqref{eq:BRL_param} and \eqref{eq:LMI_BRL}), while ensuring that the constraint in \eqref{eq:param_cond_BRL} is met. For example, one may adopt the distributions for the pair \((A,F)\) presented in Section \ref{sec:AF-distributions} and extend them by including an independent distribution for \(C\). Since \(C\) is free in \eqref{eq:BRL_param}, one may specify
\(
\mathrm{vec}(C)\sim \mathcal{N}(\mu_c,\Sigma_c),
\)
analogous to the distribution chosen for \(\tilde{F}\).

For the parameters \(\tilde{B}\), \(D\), and \(\tilde{G}\), observe that the condition \eqref{eq:param_cond_BRL} is equivalent to
\begin{equation}
\label{eq:Z-definition}
Z^\top Z < I,\quad \text{with} \quad Z:=\frac{1}{\gamma}\begin{bmatrix}\tilde{B}^\top~D^\top ~\tilde{G}^\top \end{bmatrix}^\top.
\end{equation}

Thus, one may generate \(\tilde{B}\), \(D\), and \(\tilde{G}\) by first sampling a matrix \(Z\) from a distribution supported on the unit ball with respect to the spectral-norm. However, sampling \(Z\) arbitrarily on this ball may yield values such that \(Z^\top Z\) is considerably smaller than the identity (i.e., \(Z^\top Z<\gamma_\epsilon I\) with \(\gamma_\epsilon\ll1\)), which in turn renders \(\gamma\) a conservative upper bound for \(\|\mathbb{L}\|\).

To address this issue, we instead aim to sample \(Z\) so that its maximum singular value \(\sigma^z_1\) is close to 1. Specifically, we let \(\sigma^z_1=(1-\varepsilon)\)
with \(0<\varepsilon<<1\). To implement this, we consider the Singular Value Decomposition (SVD) of \(Z\):
\[
Z = U_z\,\mathrm{diag}_{2n+q,\ell}(\sigma^z_1,\dots,\sigma^z_k)V_z^\top,~k=\min(2n+q,\ell),
\]
with the maximum singular value \(\sigma^z_1\) fixed to \((1-\varepsilon)\) and the rest are independently sampled from the uniform distribution
\(
\sigma^z_i\sim \mathrm{U}\bigl(0,\,1-\varepsilon\bigr), 
\) for \(i\in\{2\dots,k\}\).
For the orthonormal matrices \(U_z\) and \(V_z\), one practical approach is to sample
\(
\mathrm{vec}(\tilde{U}_z)\sim \mathcal{N}(0,I\otimes \Sigma_{U_z})\text{ and } \mathrm{vec}(\tilde{V}_z)\sim \mathcal{N}(0,I\otimes \Sigma_{V_z}),
\)
and then compute \(U_z\) and \(V_z\) as the orthogonal factors in the QR-factorizations of \(\tilde{U}_z\) and \(\tilde{V}_z\), respectively. The resulting distribution over \(U_z\) and \(V_z\) is known as the matrix Angular Central Gaussian (ACG) distribution \cite{chikuse1990matrix}.
Another approach to sample \(U_z\) and \(V_z\) is to sample them using Cayley transform \cite{gallier2006remarks} \(U_z\!=\!E_u(I-S_u)(I+S_u)^{-1}\) and \(V_z\!=\!E_z(I-S_v)(I+S_v)^{-1}\), where \(\mathrm{veck}(S_u)\sim \mathcal{N}(\mu_u,\Sigma_u)\), \(\mathrm{veck}(S_v)\sim \mathcal{N}(\mu_v,\Sigma_v)\), and \(E_u,E_v\) are diagonal matrices with entries \(\pm1\) sampled uniformly.

In the special case of a single input (\(\ell=1\)), the matrix \(Z\) reduces to a column vector, so its SVD takes the form
\(
Z=\sigma_1^z u_z,
\)
with \(u_z\) a unit vector obtained by normalizing a Gaussian vector \(\tilde{u}_z\sim \mathcal{N}(0,\Sigma_{u_z})\). The distribution of \(u_z\) then follows the (vector) ACG distribution. 

If we want to have more flexibility in specifying the uncertainty in the individual norms of \(\tilde{B}\), \(\tilde{G}\), and \(D\), we sample them based on their SVDs (similar to \(Z\)) with the maximum singular values set to \(\lambda_b(1-\varepsilon)\gamma\), \(\lambda_g(1-\varepsilon)\gamma\), and \(\lambda_d(1-\varepsilon)\gamma\), respectively. Here, the weights \(\lambda_b\), \(\lambda_g\), and \(\lambda_d\) are strictly positive and satisfy
\(\lambda_b + \lambda_g + \lambda_d = 1\). These weights can be drawn from a Dirichlet distribution
\(
\lambda = \left[\lambda_b,\lambda_g,\lambda_d\right] \sim \mathrm{Dir}(\alpha_1, \alpha_2, \alpha_3),
\)
with \(\alpha_1,\alpha_2,\alpha_3 > 0\). 

We summarize the extension of the distributions defined in \eqref{def:WSN-distributions} to the case of internal stability with \(\|L\|<\gamma\) as follows

\begin{definition}[WNS-BRL Distributions]
\label{def:WSN-BRL-distributions}
Let \(\gamma\!>\!0\), \(P^{-1}\!\sim\! \mathrm{Wishart}(k_p,\Sigma_p)\) with \(k_p\!\geq\!n\), \(\mathrm{vec}(\tilde{F})\!\sim\! \mathcal{N}(\mu_f,\Sigma_f)\); \(\mathrm{vec}(C)\sim \mathcal{N}(\mu_c,\Sigma_c)\), and \(\mathrm{veck}(S)\!\sim\! \mathcal{N}(\mu_s,\Sigma_s)\). Let \(
Z=U_z\,\mathrm{diag}_{2n+q,\ell}(\sigma^z_1,\dots,\sigma^z_k)V^\top_z,\) with \(\sigma^z_1=1-\varepsilon\) for a fixed constant \(0<\varepsilon\ll1\), and the remaining singular values \(\sigma^z_2,\dots,\sigma_k^z\) are drawn independently from \(\mathrm{U}(0,1-\varepsilon)\) with \(k=\min(2n+q,\ell)\). The orthonormal matrices \(U_z\) and \(V_z\) are sampled from a matrix ACG distribution (or via a Cayley transform based on Gaussian skew-symmetric matrices and diagonal matrices with \(\pm1\) entries sampled uniformly), and the matrices \(\tilde{B}\), \(D\), and \(\tilde{G}\) are specified as in \eqref{eq:Z-definition}. If all the random variable are independent with \(A,B,C,D,F,G\) defined via \eqref{eq:BRL_param}, then resulting SSM in \eqref{eq:SSM} is said to follow
\begin{itemize}
    \item The WNS-BRL-ACG (WNS-BRL-Cayley) distribution if \(Q\in\Splus\) is fixed and \(U_z,V_z\) follow the ACG (or Cayley) distribution.
    \item The \(Q\)-WNS-BRL-ACG (\(Q\)-WNS-BRL-Cayley) distribution if \(Q\sim \mathrm{Wishart}(k_q,\Sigma_q)\) independently of the other variables.
    \item The \(\alpha P\)-WNS-BRL-ACG (\(\alpha P\)-WNS-BRL-Cayley) distribution if \(\alpha\sim \mathrm{Gamma}(k_\alpha,\theta_\alpha)\) independently with \(Q=\alpha P\).
\end{itemize}
Furthermore, if we choose to sample \(\tilde{B},\tilde{G}\) and \(D\) using their corresponding SVD with weights sampled from a Dirichlet distribution, then we call the distributions by the Dirichlet variants of their names.
\end{definition}
For these distributions, the expected value of \(A\) is the same as in \eqref{eq:Expt_A} but with the inclusion of \(\mathbb{E}\left[C^\top C\right]\). As for \(D,B\) and \(G\), the expectation is more involved and cannot be found in closed form for most of the cases. Nevertheless, the Dirichlet variants of the distributions provide us with the flexibility of specifying our prior belief on the size of the norms for the input-related matrices. 
{
\begin{remark}  
Even with no prior knowledge of the SSM, Proposition 3 provides an unconstrained parametrization for maximum likelihood estimation without the need for distributions. Similar to \cite{kon2024unconstrained}, positive scalars can be parametrized with an exponential function (e.g., \(\alpha =\mathrm{e}^{\bar{\alpha}}, \bar{\alpha}\in\mathbb{R}\)) and bounded scalars by a sigmoid function (e.g., \(\sigma_z^i=\frac{1-\varepsilon}{1+\mathrm{e}^{-{\bar{\sigma}^i_z}}},\bar{\sigma}_z^i\in\mathbb{R}\)); positive-definite matrices via Cholesky factors postive diagonal entries; and orthogonal matrices through a Cayley transform. Alternatively, one may instead perform Riemannian gradient descent on the manifolds of positive-definite, skew-symmetric and orthogonal matrices, similar to \cite{obara2023stable}.
\end{remark}}

\section{Bayesian Inference Example}
\begin{figure*}[t]  
    \centering

    \subfloat[\label{fig:avg_plot}]%
    {\includegraphics[width=.43\textwidth]{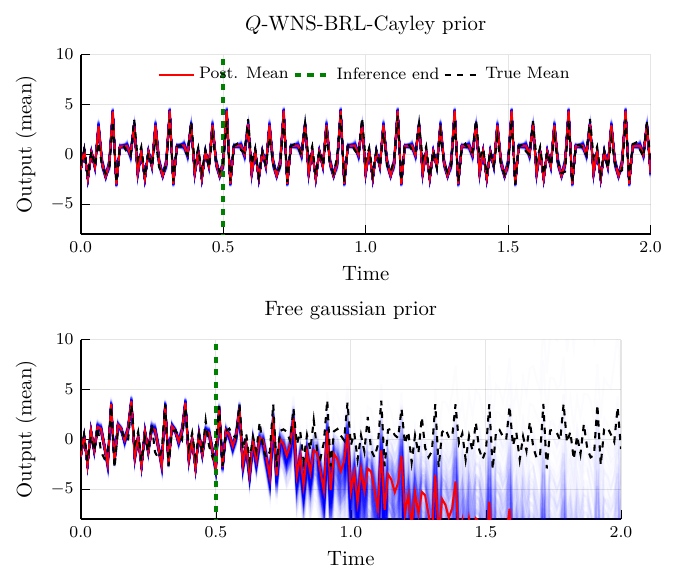}}
    ~
    \subfloat[\label{fig:var_plot}]%
    {\includegraphics[width=.43\textwidth]{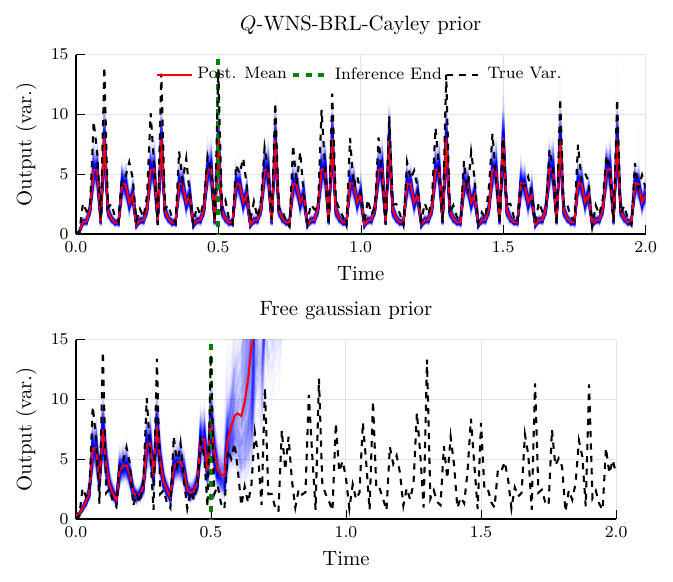}}

    \caption{The posterior mean (in red) and the posterior samples (in blue) of the mean (variance) over 100 realizations of the SSM in \eqref{eq:SSM} compared to the mean (variance) over 100 realizations from the true SSM (in black).}
    \label{fig:mean_and_var}
\end{figure*}
In this section, we illustrate how the prior distributions from Definition \ref{def:WSN-BRL-distributions} can be employed in Bayesian inference. In our setting, we assume that the true SSM~\eqref{eq:SSM} is known to satisfy the stochastic BRL with \(\|\mathbb{L}\| < \gamma\), where \(\gamma>0\) may itself be modeled as a random variable to capture uncertainty in the system gain.

Let \(\{t_i\}_{i=1}^N \subset [0,T]\) denote the measurement time instances. At each \(t_i\), we record the input \(u(t_i)\) and the corresponding noisy output \(\tilde{y}(t_i)\), where
\(
\tilde{y}(t_i)= y(t_i;x_0,u) + \xi(t_i).
\)
Here, \(y(t;x_0,u)\) is the output of the SSM~\eqref{eq:SSM} corresponding to the initial condition \(x(0)=x_0\), and \(\xi(t_i)\) represents measurement noise. We assume that the noise \(\xi(t_i)\) is independent and identically distributed \(\xi(t_i)\sim \mathcal{N}(0,\sigma^2I_{q})\).
Furthermore, we assume that the data is collected from \(M\) independent realizations (sample paths) of the SSM, all starting from the same initial condition \(x_0\). With this setup, Bayesian inference can be performed to update the prior distributions over the SSM parameters based on the observed data \(\{(u(t_i),\tilde{y}(t_i))\}_{i=1}^N\) from the \(M\) experiments. 


To generate data, we sampled a model with \({n=4},q=1,\ell=2\) using the \(Q\)-WNS-BRL-Cayley distribution with \(\Sigma_p\!=\!I_n, \Sigma_q\!=\! 2I, k_p\!=\!6, k_q\!=\!6, \mu_s\!=\!0, \Sigma_s\!=\!0.01I, \mu_f\!=\!0, \Sigma_f\!=\!2I, \mu_c\!= \!0, \Sigma_c\!=\!2I,\varepsilon =10^{-4}, \Sigma_u\!=\!I\), and \(\Sigma_{v}\!=\!I\).
We also set \(\rho\!=\!0.3\), \(\gamma\!=\!3\), and the measurement error variance \(\sigma\!=\!0.15\). We took {\(M\!=\!10\)} experiments and generated the inputs using 6 Fourier terms with random coefficients. The priors used for the inference were chosen to be different from the ones used for sampling the model to generate the data. Specifically, we chose the \(Q\)-WNS-BRL-Cayley with the other different choices being \(k_p=4, k_q = 4,\Sigma_q=I\), and \(\Sigma_s=I\). In addition, We assigned a prior \(\rho \sim \mathrm{U}(-0.95,0.95)\) and \(\gamma \sim \mathrm{Gamma}(1,1)\) (\(\mathbb{E}[\gamma]=1\)). 

To perform the inference, we used a Markov Chain Monte Carlo (MCMC) \cite{geyer1992practical} method {to obtain samples from the posterior distribution given the data}. The MCMC method we used is Hamiltonian Monte Carlo (HMC) with a step size of 0.01 and a number of steps being 3 \cite{neal2011mcmc}. We implemented the inference using the Turing.jl package in Julia \cite{ge2018t}. The code used for generating the results is available at Github \url{https://github.com/MOHAMMADZAHD93/StableSSMs}.

To compare, we used HMC with the same settings but with priors assigned directly on the SSM matrices without any parametrization. This approach can cause numerical issues {during the sampling process of an MCMC strategy since proposed SSM samples can lead to unstable trajectories}. To deal with this, every time a proposed SSM sample results in an unstable trajectory, we assign a log-likelihood of \(-\infty\) to that particular sample of matrices. That is, we force a rejection of the proposed sample in the MCMC step. Fig. \ref{fig:mean_and_var} demonstrate the results. 
In Fig. \ref{fig:avg_plot}, we compare the posterior mean and 500 posterior trajectory samples of the mean output (computed over 100 sample paths) for the two prior modeling strategies. In each case, we generated the results by{ simulating 100 trajectories for each one of the last 1000 posterior SSM samples obtained by the HMC and computing their sample mean trajectory}. {The last 500 mean trajectories obtained from the last 500 SSM posterior samples out of the last 1000 samples are overlaid} (in blue) along with the true mean trajectory (in black) which was generated using the actual model parameters. 

As seen in the top panel of Fig.\ref{fig:avg_plot}, the trajectories obtained under the \(Q\)-WNS-BRL-Cayley prior remain close to the true mean throughout the entire inference and prediction horizon. Even after the inference period ends (delineated by the green dashed line), the simulated trajectories exhibit bounded and stable behavior. This observation highlights that incorporating stability-inducing priors, done here through the \(Q\)-WNS-BRL-Cayley prior, naturally prevents the inferred system from diverging when extrapolated beyond the observed data. In contrast, the trajectories obtained under the free parametrization begin to diverge from the true mean shortly after the inference interval. Although we mitigate the instabilities by assigning a log-likelihood of \(-\infty\) to any sample that yields an unstable trajectory during the MCMC sampling, that does not ensure the posterior draws of the matrices of the SSM in \eqref{eq:SSM} lead to an externally stable system. A similar phenomenon is observed in Fig.\ref{fig:var_plot}, which shows the variance of the output trajectories computed in the same way as in Fig.\ref{fig:avg_plot}. Under the \(Q\)-WNS-BRL-Cayley prior, the posterior trajectory samples remain well-behaved, producing a variance that closely matches the true variance over time. In contrast, the free parametrization exhibits a rapidly increasing variance once we move past the inference interval. These diverging variance trajectories again reflect the fact that we cannot guarantee with a free parametrization approach that the SSM in \eqref{eq:SSM} is externally stable. 

Overall, these figures illustrate that enforcing stability through the \(Q\)-WNS-BRL-Cayley distribution yields posterior {samples} that are robust and exhibit predictive performance that remains consistent even in the extrapolation regime. By contrast, a naive free parametrization of the SSM can cause numerical challenges during MCMC sampling and lead to unreliable predictions, particularly for longer-term forecasts.

\section{Conclusion and Future Work}
In this work, we have introduced a direct parametrization for continuous-time stochastic SSMs that guarantees external stability via the stochastic BRL. This framework not only enables the construction of probabilistic priors that enforce stability almost surely but also facilitates robust Bayesian inference, as demonstrated by our simulation study.

Future work will explore incorporating alternative stability notions and extending the approach to linear parameter-varying systems. Another direction is to experiment with and develop more efficient Bayesian inference methods for high-dimensional SSMs, for example, by leveraging variational inference techniques.
\thispagestyle{empty}
\begingroup
\bibliographystyle{IEEEtran}
\bibliography{literature}							

\begin{thebibliography}{10}
\providecommand{\url}[1]{#1}
\csname url@samestyle\endcsname
\providecommand{\newblock}{\relax}
\providecommand{\bibinfo}[2]{#2}
\providecommand{\BIBentrySTDinterwordspacing}{\spaceskip=0pt\relax}
\providecommand{\BIBentryALTinterwordstretchfactor}{4}
\providecommand{\BIBentryALTinterwordspacing}{\spaceskip=\fontdimen2\font plus
\BIBentryALTinterwordstretchfactor\fontdimen3\font minus \fontdimen4\font\relax}
\providecommand{\BIBforeignlanguage}[2]{{%
\expandafter\ifx\csname l@#1\endcsname\relax
\typeout{** WARNING: IEEEtran.bst: No hyphenation pattern has been}%
\typeout{** loaded for the language `#1'. Using the pattern for}%
\typeout{** the default language instead.}%
\else
\language=\csname l@#1\endcsname
\fi
#2}}
\providecommand{\BIBdecl}{\relax}
\BIBdecl

\bibitem{van1994n4sid}
P.~Van~Overschee and B.~De~Moor, ``N4sid: Subspace algorithms for the identification of combined deterministic-stochastic systems,'' \emph{Automatica}, vol.~30, no.~1, pp. 75--93, 1994.

\bibitem{campi2006iterative}
M.~C. Campi, T.~Sugie, and F.~Sakai, ``Iterative identification method for linear continuous-time systems,'' in \emph{Proceedings of the 45th IEEE Conference on Decision and Control}.\hskip 1em plus 0.5em minus 0.4em\relax IEEE, 2006, pp. 817--822.

\bibitem{mckelvey2004data}
T.~McKelvey, A.~Helmersson, and T.~Ribarits, ``Data driven local coordinates for multivariable linear systems and their application to system identification,'' \emph{Automatica}, vol.~40, no.~9, pp. 1629--1635, 2004.

\bibitem{simpkins2012system}
A.~Simpkins, ``System identification: Theory for the user, (ljung, l.; 1999)[on the shelf],'' \emph{IEEE Robotics \& Automation Magazine}, vol.~19, no.~2, pp. 95--96, 2012.

\bibitem{sato2019riemannian}
K.~Sato, H.~Sato, and T.~Damm, ``Riemannian optimal identification method for linear systems with symmetric positive-definite matrix,'' \emph{IEEE Transactions on Automatic Control}, vol.~65, no.~11, pp. 4493--4508, 2019.

\bibitem{chiuso2016regularization}
A.~Chiuso, ``Regularization and bayesian learning in dynamical systems: Past, present and future,'' \emph{Annual Reviews in Control}, vol.~41, pp. 24--38, 2016.

\bibitem{ljung2020shift}
L.~Ljung, T.~Chen, and B.~Mu, ``A shift in paradigm for system identification,'' \emph{International Journal of Control}, vol.~93, no.~2, pp. 173--180, 2020.

\bibitem{schon2015sequential}
T.~B. Sch{\"o}n, F.~Lindsten, J.~Dahlin, J.~W{\aa}gberg, C.~A. Naesseth, A.~Svensson, and L.~Dai, ``Sequential monte carlo methods for system identification,'' \emph{IFAC-PapersOnLine}, vol.~48, pp. 775--786, 2015.

\bibitem{GRES2022108581}
S.~Greś, M.~Döhler, N.-J. Jacobsen, and L.~Mevel, ``Uncertainty quantification of input matrices and transfer function in input/output subspace system identification,'' \emph{Mechanical Systems and Signal Processing}, vol. 167, p. 108581, 2022.

\bibitem{lacy2003subspace}
S.~L. Lacy and D.~S. Bernstein, ``Subspace identification with guaranteed stability using constrained optimization,'' \emph{IEEE Transactions on automatic control}, vol.~48, no.~7, pp. 1259--1263, 2003.

\bibitem{van2001identification}
T.~Van~Gestel, J.~A. Suykens, P.~Van~Dooren, and B.~De~Moor, ``Identification of stable models in subspace identification by using regularization,'' \emph{IEEE Transactions on Automatic control}, vol.~46, no.~9, pp. 1416--1420, 2001.

\bibitem{boots2007constraint}
B.~Boots, G.~J. Gordon, and S.~Siddiqi, ``A constraint generation approach to learning stable linear dynamical systems,'' \emph{Advances in neural information processing systems}, vol.~20, 2007.

\bibitem{obara2023stable}
M.~Obara, K.~Sato, H.~Sakamoto, T.~Okuno, and A.~Takeda, ``Stable linear system identification with prior knowledge by riemannian sequential quadratic optimization,'' \emph{IEEE Transactions on Automatic Control}, vol.~69, no.~3, pp. 2060--2066, 2023.

\bibitem{CHEN2018109}
T.~Chen, ``On kernel design for regularized lti system identification,'' \emph{Automatica}, vol.~90, pp. 109--122, 2018.

\bibitem{kon2024unconstrained}
J.~Kon, J.~Van De~Wijdeven, D.~Bruijnen, R.~T{\'o}th, M.~Heertjes, and T.~Oomen, ``Unconstrained parameterization of stable lpv input-output models: With application to system identification,'' in \emph{2024 European Control Conference (ECC)}.\hskip 1em plus 0.5em minus 0.4em\relax IEEE, 2024, pp. 2143--2148.

\bibitem{Network}
J.~P. Hespanha, P.~Naghshtabrizi, and Y.~Xu, ``A survey of recent results in networked control systems,'' \emph{Proceedings of the IEEE}, vol.~95, no.~1, pp. 138--162, 2007.

\bibitem{DONNET2013929}
S.~Donnet and A.~Samson, ``A review on estimation of stochastic differential equations for pharmacokinetic/pharmacodynamic models,'' \emph{Advanced Drug Delivery Reviews}, vol.~65, no.~7, pp. 929--939, 2013, mathematical modeling of systems pharmacogenomics towards personalized drug delivery.

\bibitem{gravell2020learning}
B.~Gravell, P.~M. Esfahani, and T.~Summers, ``Learning optimal controllers for linear systems with multiplicative noise via policy gradient,'' \emph{IEEE Transactions on Automatic Control}, vol.~66, no.~11, pp. 5283--5298, 2020.

\bibitem{stochastic_Hinfty}
D.~Hinrichsen and A.~J. Pritchard, ``Stochastic {$H_\infty$},'' \emph{SIAM Journal on Control and Optimization}, vol.~36, no.~5, pp. 1504--1538, 1998.

\bibitem{Bartlett}
R.~A. Wijsman, ``{Random Orthogonal Transformations and their use in Some Classical Distribution Problems in Multivariate Analysis},'' \emph{The Annals of Mathematical Statistics}, vol.~28, no.~2, pp. 415 -- 423, 1957.

\bibitem{chikuse1990matrix}
Y.~Chikuse, ``The matrix angular central gaussian distribution,'' \emph{Journal of Multivariate Analysis}, vol.~33, no.~2, pp. 265--274, 1990.

\bibitem{gallier2006remarks}
J.~Gallier, ``Remarks on the cayley representation of orthogonal matrices and on perturbing the diagonal of a matrix to make it invertible,'' \emph{arXiv preprint math/0606320}, 2006.

\bibitem{geyer1992practical}
C.~J. Geyer, ``Practical markov chain monte carlo,'' \emph{Statistical science}, pp. 473--483, 1992.

\bibitem{neal2011mcmc}
R.~M. Neal \emph{et~al.}, ``Mcmc using hamiltonian dynamics,'' \emph{Handbook of markov chain monte carlo}, vol.~2, no.~11, p.~2, 2011.

\bibitem{ge2018t}
H.~Ge, K.~Xu, and Z.~Ghahramani, ``Turing: a language for flexible probabilistic inference,'' in \emph{International Conference on Artificial Intelligence and Statistics, {AISTATS} 2018, 9-11 April 2018, Playa Blanca, Lanzarote, Canary Islands, Spain}, 2018, pp. 1682--1690.

\end{thebibliography}
\endgroup

\end{document}